%
%ICC09 paper, final version.
%--------------------------------------------------------------------------
\documentclass[conference]{IEEEtran}
%\IEEEoverridecommandlockouts

\makeatletter
\def\ps@headings{%
\def\@oddhead{\mbox{}\scriptsize\rightmark \hfil \thepage}%
\def\@evenhead{\scriptsize\thepage \hfil \leftmark\mbox{}}%
\def\@oddfoot{}%
\def\@evenfoot{}}
\makeatother

\pagestyle{headings}

%-------------------------------------
%\input{epsf.sty}
\usepackage{amssymb,amsmath}
\usepackage{paralist}
\usepackage{graphicx}
\usepackage{color}
%\usepackage{arydshln}
%\definecolor{rltblue}{rgb}{0,0,0.75}
\usepackage{cite}

\newcommand{\F}{\mathbf{F}}

\newcommand{\C}{\mathcal{C}}
\newcommand{\N}{\mathcal{N}}

\newtheorem{theorem}{\textbf{Theorem}}

\newtheorem{definition}[theorem]{\textbf{Definition}}

\newtheorem{corollary}[theorem]{Corollary}

\newcommand{\nix}[1]{}

\begin{document}
\title{Network Coding-Based Protection Strategy\\
Against Node Failures}
\author{
\authorblockN{Salah A. Aly}
\authorblockA{Department of ECE \\Iowa State University\\  Ames, IA 50011, USA\\Email:salah@iastate.edu}
\and
\authorblockN{Ahmed E. Kamal}
\authorblockA{Department of ECE \\Iowa State University\\Ames, IA 50011, USA\\Email:
kamal@iastate.edu}
 } \maketitle

\begin{abstract}
The enormous increase in the usage of communication networks has made protection against node and link failures  essential in the
deployment of reliable networks.
To prevent loss of data due to node
failures, a network protection strategy is proposed that aims to withstand
such failures.
Particularly, a protection strategy against any single node failure is designed for a given network with a set of $n$ disjoint paths between senders and receivers. Network coding and reduced capacity are deployed in this
strategy without adding extra working paths to the readily available
connection paths.
This strategy is based on protection against node failures as  protection against multiple link failures.
 In addition, the encoding and decoding operational  aspects of the
premeditated protection strategy are demonstrated.
\end{abstract}

\section{Introduction}\label{sec:intro}
With the increase in the capacity of backbone networks, the failure of
a single link or node can result in the loss of enormous amounts of
information, which may lead to catastrophes, or at least loss of
revenue.
Network connections are therefore provisioned such that
they can survive such failures.
Several techniques to provide network survivability have been
introduced in the literature. Such
techniques either add extra resources, or reserve some of the
available network resources as backup circuits, just for the sake of
recovery from failures.
Recovery from failures is also required to be agile in order to
minimize the network outage time.
This recovery usually involves two steps: fault diagnosis and
location, and rerouting connections.
Hence, the optimal network survivability problem is
a multi-objective problem in terms of resource efficiency, operation cost,
and agility~\cite{zeng07}.

 Network coding
allows the intermediate nodes not only to forward packets using network
scheduling algorithms, but also encode/decode them using algebraic
primitive operations, see~\cite{ahlswede00,fragouli06,soljanin07,yeung06}
and the references therein. As an application of network coding, data loss
because of failures in communication links can be detected and recovered
if the sources are allowed to perform network coding operations~\cite{sprintson07}.

In network survivability, the four different types of failures that might
affect network operations are\cite{somani06,zhou00}: \begin{inparaenum} \item  link failure,
\item node failure, \item shared risk link group (SRLG) failure, and
    \item network control system failure.
  \end{inparaenum}
Henceforth, one needs to design network protection strategies against
these types of failures.  Although the common frequent failures are link
failures, node failures sometimes happen due to burned swritch/router,
fire, or any other hardware damage. In addition, the failure might be due to
network maintenance. However, node failure is more damaging than
link or system failures since multiple connections may be affected
by the failure of a single node.

Recently, the authors have proposed employing the network coding
technique in order to protect against
single and multiple link failures
\cite{aly08j,aly08i,kamal07a}, in a manner that achieves both agility
and resource efficiency. The idea is to form linear combinations of
data packets transmitted on the working circuits, and transmit these
combinations simultaneously on a shared protection circuit.
The protection circuit can take the form of an additional p-cycle~\cite{kamal06a, kamal07a}, a path  or a general tree
network~\cite{kamal08b}.
In the case of failures, the linear combinations can be used by the end
nodes of the connection(s) affected by the failure(s) to recover the
lost data packets.
These network protection strategies against link failures using
network coding have been extended to use reduced capacities
instead of reserving, or even adding separate protection
circuits~\cite{aly08j,aly08i}.
The advantages of using network coding-based protection are twofold:
first, one set of protection circuits is shared between a number of
connections, hence leading to reduced protection cost; and second,
copies of data packets are transmitted on the shared protection
circuit after being linearly combined, hence leading to fast recovery
of lost data since failure detection and data rerouting are not
needed.

In this paper, we consider the problem of providing protection against node
failures by the means of  network coding and  the reduced capacity
techniques.
As a byproduct of this protection strategy, protection against any
single link failure is also guaranteed.
This is based on representing the node failure by the failure of
multiple links.
However, the failed links are not any arbitrary links.
Since working paths used by the connections that are protected
together are link disjoint, the links that need to be protected are
used by different connections.

\section{Network Model}\label{sec:model}
The following points highlight the network model and  main considerations.

\begin{figure}[t]
\begin{center}
\includegraphics[scale=1.3]{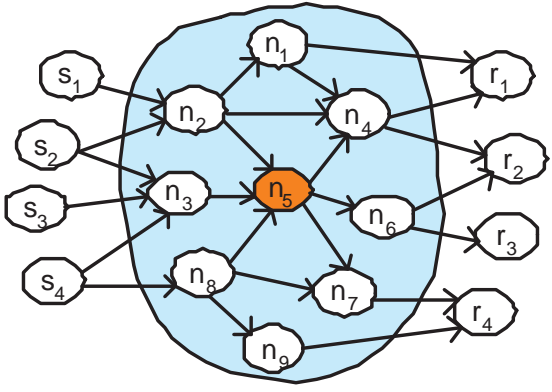}
\caption{
A Network $\N$ with  a set of nodes $\textbf{V}$ and a set of edges $E$.
The nodes $\textbf{V}$ consist of sources $S$, receivers $R$, and relay
nodes $V$. The node $n_5$ represents a failed node with 3 working
connections that must be protected at the failure
incidence.}\label{fig:relaynet}
\end{center}
\end{figure}

\begin{itemize}
\item Let $\N$ be a network represented by an abstract graph
    $G=(\textbf{V},E)$, where $\textbf{V}$ is the set of nodes and $E$
    be   set of undirected edges. Let $S$ and $R$ be  sets of independent sources
    and destinations, respectively. The set $\textbf{V}=V\cup S \cup
    R$ contains the relay nodes,
sources, and destinations, respectively,  as show in
    Fig.~\ref{fig:relaynet}. Assume for simplicity that $|S|=|R|=n$,
    hence the set of sources is equal to the set of receivers.

\item A path (connection) is a set of edges connected together with a
    starting node (sender) and an ending node (receiver).  \begin{eqnarray} L_i=
    \{(s_i,w_{1i}),(w_{1i},w_{2i}),\ldots,(w_{(m)i},r_i) \},\end{eqnarray} where
    $1\leq i\leq n$, $(w_{(j-1)i},w_{ji}) \in E$, and +ve integer m.
\item The node can be a router, switch, or an end terminal depending
    on the network model $\N$ and the transmission layer, see Fig.~\ref{fig:npaths}.

\item L is a set of paths $L=\{L_1,L_2,\ldots,L_n\}$ carrying the data
    from the sources to  receivers. Connection paths are link
    disjoint and provisioned in the network between  senders and
    receivers. All connections have the same bandwidth, otherwise a
    connection with a high bandwidth can be divided into multiple
    connections, each of which has the unit capacity. There are
    exactly $n$ connections.  A sender
    with a high capacity can divide its capacity into multiple unit
    capacities.

    \item We consider the case that the failures happen in the relay
        nodes.
The failures in the relay nodes might happen due to a failed
    switch, router, or any connecting point as shown in fig.~\ref{fig:relaynet}. We assume that the failures are independent of each other.
\end{itemize}

\begin{definition}[Node Relay Degree]\label{def:nodedegree}
Let $u$ be an arbitrary node in $V=\textbf{V}\backslash \{S \cup R\}$, which
relays the traffic between source and terminal nodes. The
number of connections passing through this node is called the
\textit{node relay degree}\/, and is referred to as $d (u)$. Put differently:
\begin{eqnarray}
d(u)=\big|\big\{ L_i: (u,w) \in L_i, \forall w \in \textbf{V}, 1\leq i\leq n\big\}\big|.
\end{eqnarray}

\end{definition}

Note that the above definition is different from the graph theoretic
definition of the node degrees; input and output degrees.
However, the node degree must not be less than the node relay degree. Furthermore, the node relay degree of a node $u$ is $d(u) \leq \lfloor \mu(u)/2\rfloor $, where $\mu(u)$ is the degree of a node u in an undirected graph.

We can define the network capacity from the min-cut max-flow
information theoretic view~\cite{ahlswede00}. It can be described as
follows.
\begin{definition}\label{def:capacitylink}
The unit capacity of a connecting path $L_i$ between $s_i$ and $r_i$ is defined
by \begin{eqnarray} c_i=\left\{
      \begin{array}{ll}
        1, & \hbox{$L_i$ is \emph{active};} \\
        0, & \hbox{otherwise.}
      \end{array}
    \right.
\end{eqnarray}
The total capacity of $\N$ is given by the summation of all path
capacities. What we mean by an \emph{active}  path is that the receiver is
able to receive and process packets throughout this path, see for further details~\cite{aly08preprint1}.
\end{definition}

 	\smallbreak

Clearly, if all paths are active then the total capacity of all
connections is $n$ and
the normalized capacity is $1$. If we assume there are n disjoint paths, then,
in general, the normalized capacity of the network for the active and failed paths is
computed by
\begin{eqnarray}
C_\N=\frac{1}{n}\sum_{i=1}^n c_i.
\end{eqnarray}

The\emph{ working paths} on a network with $n$ connection paths carry
traffic under normal operations, see Fig.~\ref{fig:npaths}. The \emph{Protection paths} provide an
alternate backup path to carry the traffic in case of failures. A
protection scheme ensures that data sent from the sources will reach the
receivers in case of failure incidences on the working paths~\cite{aly08i,aly08j}.

\section{Protection Against A Single Node Failure}\label{sec:SNF}

In this section we demonstrate a model for network protection against a
single node failure (SNF) using network coding. Previous work focused on
network protection against single and multiple link failures using
rerouting and sending packets throughout different
links. We use network coding and
reduced capacity on the paths carrying data from the sources to
destinations. The idea has been developed for the purpose of link and
path failures in~\cite{aly08i,kamal06a}. We present a
protection strategy  denoted by NPS-t.
Under NPS-t,
the normalized network capacity is based on the max-flow between sources
and destinations, and its given by $(n-t)/n$, where $t$ is the maximum
number of connections traversing any node in the network, i.e. in other words, it is the max node degree.
We develop the design methodology  of this
strategy.
In addition we derive bounds on the field size and
encoding operations.

Assume we have the same definitions as shown in the previous section. Let
$d(u)$ be the relay node degree of a node $u$ in
$V$. We define $d_0$ to be the $\max$ over all node's
relay degrees in the network $\N$.
\begin{eqnarray}
d_0= \max_{u \in V}  d (u)
\end{eqnarray}
Note that $d_o$ is the  degree  representing the max links that can
fail, in other words it is the number of working paths that might fail
due to the failure of a relay node.
Let $v$ be the node with relay degree $d_0$, and assume $v$ to be the
failed node. Our goal is
to protect the network $\N$ against this node failure.
In fact $d_0$ represents a set of failed connections caused by a failure of the
node $v$ in the network $\N$.
Although the failure of $v$ is represented by the failure of $2 d_0$ links,
each incoming link at $v$ has a corresponding outgoing link, and if
either, or both of these two links fail, the effect on the connection
is the same.
Therefore, our protection strategy is based on representing the node
failure by the failure of $d_0$ connections, and we therefore need
to protect against $d_0$ failed connections.
\begin{figure}[t]
 \begin{center} % Requires \usepackage{graphicx}
  \includegraphics[height=4cm,width=7cm]{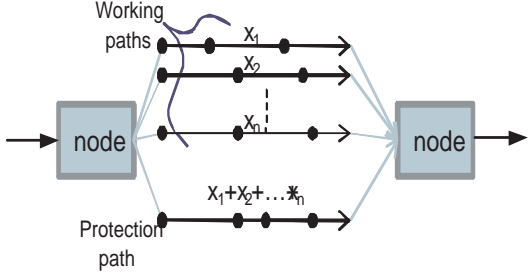}
  \caption{Network protection against a single path failure using reduced capacity and network coding.
   One path out of  $n$ primary paths  carries encoded data. The black points represent various other relay nodes}\label{fig:npaths}
\end{center}
\end{figure}

\subsection{NPS-t Protecting SNF with $d_0=t$ and Achieving $(n-t)/n$ Normalized Capacity}

Assume the sender $s_i$ sends
a message to the receiver $r_i$ via the path $L_i$. Also, assume without loss of
generality that $t$ disjoint working paths have failed due to the
failure of a single node.
Then, we describe protection code  as shown in
Scheme (\ref{eq:tfailures1}).
Under this protection scheme, $n-t$ of the working paths will
carry plain data units denoted by $x^i_j$'s, i.e. the
data unit transmitted on working path $j$ in round $i$.
The remaining $t$ paths will carry linear combinations, which are
denoted by $y_i$'s. They will be used to recover from data unit losses due to the failure.
\begin{figure}[h]
\begin{eqnarray}\label{eq:tfailures1}
\begin{array}{|c|cccccc|}
\hline
&1&2&3&4&\ldots&\lfloor n/t\rfloor\\
\hline    \hline
s_1 \rightarrow r_1 & y_1&x_1^1 &x_1^2&x_1^3&\ldots &x_{1}^{n-1} \\
s_2 \rightarrow r_2 &  y_2&  x_2^1 &x_2^2&x_2^3&\ldots&x_2^{n-1}  \\
\vdots&\vdots&\vdots&\vdots&\vdots&\vdots&\vdots\\
s_{t} \rightarrow r_{t} &  y_{t}&  x_{t}^1 &x_{t}^2&x_{t}^3&\ldots&x_{t}^{n-1}  \\
\!\! s_{t+1} \! \rightarrow \!\! r_{t+1} \!\! &\!\! x_{t+1}^1\!&\! y_{t+1}& \! x_{t+1}^2&\! x_{2t+1}^3&\! \ldots&\! x_{2t+1}^{n-1} \!\! \\
\vdots&\vdots&\vdots&\vdots&\vdots&\vdots&\vdots\\
s_{2t} \rightarrow r_{2t}&x_{2t}^1&y_{2t}& x_{2t}^2& x_{2t}^3&\ldots&x_{2t}^{n-1}  \\
\!\! s_{2t+1}\!\! \rightarrow \!\!r_{2t+1}\!\!&\!\!x_{2t+1}^1\!&\!x_{2t+1}^2&\! y_{2t+1}&\! x_{2t+1}^3& \ldots&\! x_{2t+1}^{n-1} \!\! \\
\vdots&\vdots&\vdots&\vdots&\vdots&\vdots&\vdots\\
%\vdots&\vdots&\vdots&\vdots&\vdots&\vdots&\vdots\\
%\!\! s_{n-(2t+1)} \rightarrow r_{n-(2t+1)}&x_{n-(2t+1)}^1&x_{n-(2t+1)}^2& x_{n-(2t+1)}^3&\!y_{n-(2t+1)}&\! %\ldots&\! x_{n-(2t+1)}^{n-1}  \\
\vdots&\vdots&\vdots&\vdots&\vdots&\vdots&\vdots\\
s_n \rightarrow r_n & x_n^1&x_n^2&x_n^3&x_n^4&\ldots&y_{n}\\
\hline
\end{array}
\end{eqnarray}
\end{figure}

In general, $y_\ell$ is given by
\begin{eqnarray}\label{eq:y_NPS-T}
y_\ell=\sum_{i=1}^{(j-1)t} a_i^\ell x_i^j + \sum_{i=jt+1}^n a_i^\ell
x_i^j  \nonumber \\ \mbox{   for  } (j-1)t+1 \leq \ell \leq jt,  1
\leq j \leq \lfloor \frac{n}{t}\rfloor.
\end{eqnarray}

We consider that the coefficients $a_i^l$'s are taken from
a finite field with $q >2$ alphabets.
Later in this section, we will show how to perform the encoding
and decoding operations for the purpose of recovery from failures. In addition, we will derive bounds on the field size in the next Section.
The following Theorem gives the normalized capacity of NPS-t strategy.
\begin{theorem}
Let $n$ be the total number of disjoint connections from sources to receivers. The
capacity of NPS-t strategy against $t$ path failures as a result of a
single node failure is given by
\begin{eqnarray}
\C_{\N}=(n-t)/(n)
\end{eqnarray}
\end{theorem}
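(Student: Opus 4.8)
The plan is to read the capacity directly off the transmission schedule in Scheme~(\ref{eq:tfailures1}) and the definition of normalized capacity from Definition~\ref{def:capacitylink}. The key observation is that the scheme is organized into rounds (the columns $1,2,\ldots,\lfloor n/t\rfloor$), and in every round exactly $t$ of the $n$ connection paths are used to carry an encoded combination $y_\ell$ rather than a plain data unit $x^i_j$. First I would make precise what ``capacity'' means here: a path $L_i$ contributes $c_i=1$ in a given round if and only if the receiver $r_i$ obtains a fresh plain data unit on that round, and $c_i=0$ if the path is instead carrying a protection symbol (or if it has failed and the lost symbol must be reconstructed). Summing over the $n$ paths and normalizing by $n$ as in the formula $C_\N=\frac1n\sum_{i=1}^n c_i$ then gives the per-round normalized throughput.

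The main steps are: (i) observe that in each of the $\lfloor n/t\rfloor$ rounds the schedule assigns the symbols $y_\ell$ to exactly $t$ distinct paths — rows $(j-1)t+1,\ldots,jt$ in round $j$ — as dictated by the range condition $(j-1)t+1\le\ell\le jt$ in~(\ref{eq:y_NPS-T}); (ii) conclude that the remaining $n-t$ paths in that round each carry a plain data unit $x^i_j$, so $\sum_i c_i = n-t$ in every round; (iii) invoke the recovery property of the code — since the $y_\ell$'s are linear combinations over a field with $q>2$ of the plain symbols on the surviving paths, any single node failure (hence any $t$ link-disjoint connection failures, by the reduction argued in Section~\ref{sec:SNF}) can be repaired, so no round is ``wasted'' and the effective delivered rate really is $n-t$ plain units per round; (iv) divide by $n$ to obtain $\C_\N=(n-t)/n$.

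I would then note the consistency check that over the full cycle of $\lfloor n/t\rfloor$ rounds each path plays the role of ``protection carrier'' the right number of times, so the $(n-t)/n$ figure is the steady-state normalized capacity rather than an artifact of one particular round — this is exactly the symmetry visible in the staircase pattern of the $y$'s down the diagonal blocks of Scheme~(\ref{eq:tfailures1}).

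The step I expect to be the real obstacle — or at least the one needing the most care — is (iii): arguing that the $t$ protection symbols per round genuinely suffice to recover from \emph{any} single node failure, i.e.\ from any admissible pattern of $t$ simultaneously failed link-disjoint connections, and that the accounting in (i)--(ii) is not disrupted when the failed paths happen to include some of the $y$-carrying paths. This is really a claim about the linear independence of the relevant $t\times t$ submatrices of coefficients $a_i^\ell$, and it is presumably what the later field-size bound is designed to guarantee; for the purposes of this theorem I would invoke that solvability as the property the code is constructed to have and keep the capacity count itself purely combinatorial.
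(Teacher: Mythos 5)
Your proposal is correct and takes essentially the same approach as the paper: the paper's proof likewise counts $n-t$ plain data units delivered in each of the $n/t$ rounds and divides by the total $(n/t)\cdot n$ to obtain $(n-t)/n$. The only difference is that the paper's proof consists solely of this count and never addresses your step (iii) on recoverability --- that issue is deferred to the coefficient-selection and field-size sections --- so your version is, if anything, slightly more careful than the original.
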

\begin{proof}
In NPS-t, there are t paths that will carry encoded data in each round
time in a particular session. Without loss of generality, consider the case
in which $n/t$ is an integer\footnote{The general case in which $n/t$
  is not an integer can be accommodated by running the strategy for $m
  \lfloor (n/t) \rfloor$ rounds, where $m$ is the smallest integer
  such that $mn \mod t = 0$.} or assume that $\lfloor n/t \rfloor$. Therefore, there exists $(n/t)$ rounds,
in which the capacity
is $(n-t)$ in each round. Also, the capacity in the first round is $n-t$. Hence, we have
\begin{eqnarray}
\C_{\N} &=& \frac{\sum_{i=1}^{ n/t } (n-t)}{  (n/t)  n} \nonumber \\
%&=& \frac{(n-t)\lfloor n/t \rfloor }{\lfloor (n/t) \rfloor n} \nonumber \\
&=& \frac{n-t}{n}
\end{eqnarray}
\end{proof}

The advantages of NPS-t strategy described in Scheme (\ref{eq:tfailures1}) are that:
\begin{itemize}
\item

The data is encoded and decoded online, and it will be sent and
received in different rounds. Once the receivers detect failures, they
are able to obtain a copy of the lost data  without delay
by querying the neighboring nodes with unbroken working paths.

\item The approach is suitable for applications that do not tolerate
    packet delay such as real-time applications, e.g., multimedia  and TV
    transmissions.
\item $\%100$ recovery against any single node failure is guaranteed.
In addition, up to $t$ disjoint path failures can be recovered from.

\item Using this strategy, no extra paths are needed. This will make
    this approach more suitable for applications, in which adding
    extra paths, or reserving links or paths just for protection, may
    not be feasible.

\item The encoding and decoding operations are linear, and the
    coefficients of the variables $x_i^j$'s are taken from a finite
    field with $q> 2$ elements.
\end{itemize}

\subsection{Encoding Operations}
Assume that each connection path
$L_i$ (L) has a unit capacity from a source  $s_i$ (S) to a
receiver $r_i$ (R). The data sent from the sources S to the receivers R
is transmitted in rounds.
Under NPS-t, in every round $n-t$ paths are used to carry plain data
($x_i^j$), and $t$ paths are used to carry protected data units.
there are $t$ protection paths.
Therefore, to treat all connections fairly,
there will be $\lfloor n/t \rfloor$ rounds, and in each round the
capacity is given by n-t.

We consider the case in which all symbols $x_i^j$'s  belong to the same round.
The first t sources transmit the first encoded data units
$y_1,y_2,\ldots,y_{t}$, and
in the second round, the next $t$ sources
transmit $y_{t+1},y_{t+2},\ldots,y_{2t}$, and so
on. All sources S and receiver R must keep
track of the round numbers. Let $ID_{s_i}$ and $x_{s_i}$ be the ID and
data initiated by the source $s_i$. Assume the round time $j$ in session
$\delta$ is given by $t^{j}_{\delta}$. Then the source $s_i$ will send
$packet_{s_i}$ on the working path $L_i$ which includes
\begin{eqnarray}
Packet_{s_i}=(ID_{s_i}, x_{i}^\ell, t^\ell_\delta)
\end{eqnarray}

Also, the source $s_j$, that transmits on the protection path, will
send a packet $packet_{s_j}$:
\begin{eqnarray}
Packet_{s_j}=(ID_{s_j}, y_k, t^\ell_\delta),
\end{eqnarray}
where $y_k$ is defined in~(\ref{eq:y_NPS-T}). Hence the protection
paths are used to protect the data transmitted in round $\ell$, which
are included in the $x^l_i$ data units.
The encoded data
$y_k$ is computed in a simple way where source $s_j$, for example,
will collect all
sources' data units, and using proper coefficients, will compute the
$y_k$ data units defined in Scheme~(\ref{eq:y_NPS-T}).
In this case every data unit
$x_i^{\ell}$ is multiplied by a unique coefficient $a_i \in
\F_q$. This will differentiate the
encoded data $y_i's$.
 So, we have a system of $t$ independent equations at each round time
 that will be  used to recover at most $t$ unknown variables.

\subsection{Proper Coefficients Selection}
One way to select the coefficients $a_j^\ell$'s in each round such that we
have a system of $t$ linearly independent equations is by using the matrix $H$ shown in~(\ref{bch:parity}). Let $q$ be the order of a finite field, and
$\alpha$ be the root of unity in $\F_q$. Then we can use this matrix   to define the coefficients of the senders as:
\begin{eqnarray}\label{bch:parity} H =\left[
\begin{array}{ccccc}1 &\alpha &\alpha^2 &\cdots &\alpha^{n-1}\\1
&\alpha^2 &\alpha^4 &\cdots &\alpha^{2(n-1)}\\\vdots& \vdots &\vdots
&\ddots &\vdots\\1 &\alpha^{t-1} &\alpha^{2(t-1)} &\cdots
&\alpha^{(t-1)(n-1)}\end{array}\right].\end{eqnarray}

%\medskip

We have the following assumptions about the encoding operations as shown in Scheme~(\ref{scheme:encoding}).
\begin{compactenum}
\item  Clearly if we have one failure $t=1$, then all coefficients will be
    one. The first sender will always choose the unit value in the first row.

\item  If we assume $d_0=t$, then the $y_1,y_2,\ldots,y_t$ equations in the first round   are written as:
\begin{eqnarray}
y_1=\sum_{i=t+1}^nx_i^1, \mbox{  }
y_2=\sum_{i=t+1}^n \alpha^{(i-1) }x_i^1,
\end{eqnarray}
%
%\textcolor{red}{Salah: please rewrite $y^l_j$ as $y_j$.}
%
\begin{eqnarray}\label{eq:tcofficients}
y_j=\sum_{i=t+1}^n \alpha^{i(j-1) \mod (q-1)}x_i^1,~~~ 1\leq j \leq t
\end{eqnarray}

\end{compactenum}
Therefore, the scheme that describes the encoding operations in the first
round for $t$ link failures can be described as

\begin{eqnarray}\label{scheme:encoding}
\begin{array}{|c|ccccc|}
\hline
& \multicolumn{5}{|c|}{\mbox{ round one, t failures }}\\ \hline
&y_1&y_2&y_3&\ldots&y_t\\
\hline    \hline
  s_1 \rightarrow r_1 & 1 &1&1&\ldots   &1\\
    s_2 \rightarrow r_2 &  1& \alpha&\alpha^2&\ldots&\alpha^{t-1} \\
s_3 \rightarrow r_3 & 1& \alpha^2& \alpha^4&\ldots&\alpha^{2(t-1)}\\
 \vdots&\vdots&\vdots&\vdots&\vdots&\vdots\\
    s_i \rightarrow r_i& 1 &\alpha^{i-1} &\ldots &\ldots& \alpha^{(i-1)(t-1)}\\
 \vdots&\vdots&\vdots&\vdots&\vdots&\vdots\\
   s_n \rightarrow r_n & 1&\alpha^{n-1}&\alpha^{2(n-1)}&\ldots&\alpha^{(t-1)(n-1)}\\
\hline
\end{array}
\end{eqnarray}
This Scheme gives the general theme to choose the coefficients at any particular round in any session. However, the encoded data $y_i$'s are defined as shown in Equation~(\ref{eq:tcofficients}). In other words, for the first round in session one, the coefficients of the plain data $x_1,x_2,\ldots,x_t$ are set to zero. The scheme can be extended directly to any encoded data $y_k$.

%\bigskip

\subsection{Decoding Operations} We know that the coefficients
$a_1^\ell,a_2^\ell,\ldots,a_n^\ell$ are elements of a finite field, $\F_q$, hence
the inverses of these elements exist and they are unique.  Once
a node fails which causes $t$ data units to be lost, and once the
receivers receive $t$ linearly independent equations, they can
linearly solve these equations to obtain the $t$ unknown  data units.
At one
particular session j, we have three cases for the failures:
\begin{compactenum}[i)]
\item All $t$ link failures happened in the working paths, i.e. the working
    paths have failed to convey the messages $x_i^\ell$ in round $\ell$.
In this case, $n-t$ equations will be received, $t$ of which are linear
combinations of $n-t$ data units, and the remaining $n-2t$ are explicit
$x_i$ data units, for a total of $n-t$ equations in $n-t$ data units.
    In this case any $t$
    equations (packets) of the $t$ encoded packets can be used to recover
    the lost data.

    \item All $t$ link failures happened in the protection paths at the failed node. In this case, the
exact remaining $n-t$ packets are working paths and they do not experience
any failures. Therefore, no recovery operations are needed.
\item The third case is that the failure might happen in some working and
    protection paths simultaneously in one particular round in a session.
    The recover can be done using any $t$ protection paths as shown in case
    i.
\end{compactenum}

%\bigskip
%%%%%%%%%%%%%%%%%%%%%%%%%%%%%%%%%%%%%%%%%%%%%%%%%%%%%%%%%%%%%%%%%%%
\section{Bounds on the Finite Field Size, $\F_q$}\label{sec:bounds}
In this section we derive lower and upper bounds on the alphabet size
required  for the encoding and decoding operations.  In the proposed
schemes we assume that unidirectional connections exist between the senders and
receivers, which the information can be exchanged with little cost.
The first result shows that the alphabet size required must be greater
than the number of connections that carry unencoded data.

\begin{theorem}
Let $n$ be the number of disjoint connections in the network model $\N$.
Then the
receivers are able to decode the encoded messages over $\F_q$ and will
recover from $t\geq 2$ path failures passing through if
\begin{eqnarray}
q\geq n-t+1.
\end{eqnarray}
Also, if $q=p^r$, then $r \leq \lceil \log_p(n+1) \rceil$. The binary field is sufficient in case of a single path failure.
\end{theorem}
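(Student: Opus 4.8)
The plan is to show that the matrix $H$ from~(\ref{bch:parity}) — which supplies the encoding coefficients — has the property that every $t$ of its $n$ columns are linearly independent over $\F_q$, and then to count how large $q$ must be for such a matrix to exist. This is exactly the statement that $H$ is a parity-check matrix of an MDS-like code with minimum distance at least $t+1$; the Vandermonde structure of $H$ guarantees this provided the evaluation points $1,\alpha,\alpha^2,\ldots,\alpha^{n-1}$ are \emph{distinct} elements of $\F_q$.

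**First** I would recall the decoding requirement: when a node of relay degree $d_0=t$ fails, the receivers lose $t$ data units $x_i^j$ and must solve a linear system built from the $t$ received combinations $y_\ell$ together with the surviving plain data units. By the discussion in the Decoding Operations subsection, recovery succeeds in all three failure cases iff any $t$ columns of $H$ (restricted to the $t$ rows actually in play) form an invertible $t\times t$ matrix. **Next** I would invoke the standard fact that any $t\times t$ submatrix formed from $t$ columns of the $t\times n$ Vandermonde matrix with nodes $\alpha^0,\ldots,\alpha^{n-1}$ has determinant $\prod_{0\le j<k\le n-1,\; j,k\text{ chosen}}(\alpha^{k}-\alpha^{j})$, which is nonzero precisely when the chosen $\alpha^{j}$ are pairwise distinct. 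Hence it suffices that $\alpha$ have multiplicative order at least $n$ in $\F_q^{*}$, i.e. that $\F_q$ contain $n$ distinct powers of $\alpha$.

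**Then** comes the counting step, which yields the bound. Since only $n-t$ of the $n$ paths carry plain data to be protected (the other $t$ carry the $y_\ell$'s), the relevant Vandermonde block has $n-t$ columns, so we need $n-t$ distinct nonzero evaluation points, forcing $|\F_q^{*}| = q-1 \ge n-t$, i.e. $q \ge n-t+1$. Writing $q=p^r$ and using the crude bound $p^r = q \ge n-t+1$ (and, with the looser $t\ge 0$ slack the statement allows, $q \le$ something of order $n+1$) gives $r \le \lceil \log_p(n+1)\rceil$. **Finally**, for $t=1$ the matrix $H$ collapses to a single all-ones row, every $1\times 1$ submatrix is the unit $1\ne 0$, and no distinctness is needed, so $q=2$ suffices — the binary-field remark.

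**The main obstacle** I anticipate is not the Vandermonde nonvanishing (that is textbook) but pinning down the index bookkeeping so that the count of \emph{distinct} field elements one actually needs matches $n-t$ rather than $n$: one must argue carefully that across all $\lfloor n/t\rfloor$ rounds the coefficient assignments can be made consistently from the \emph{same} matrix $H$ (just with different columns zeroed out, per Scheme~(\ref{scheme:encoding})), so that a single $\alpha$ of order $\ge n-t$ — equivalently $q-1\ge n-t$ — works simultaneously for every round and every failure pattern. A secondary subtlety is the exponent reduction $\alpha^{i(j-1)\bmod(q-1)}$ in~(\ref{eq:tcofficients}): I would note this reduction is harmless since $\alpha^{q-1}=1$, so the reduced exponents still index distinct powers as long as they are distinct modulo $q-1$, which the bound $q-1 \ge n-t$ ensures.
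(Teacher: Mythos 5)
Your proposal follows essentially the same route as the paper's own proof: both rest on the Vandermonde/BCH-type coefficient matrix $C_t$ of~(\ref{eq:protectioncode1}), both reduce decodability to the invertibility of every $t\times t$ submatrix drawn from the $n-t$ participating columns, and both obtain $q\ge n-t+1$ by counting the $n-t$ distinct nonzero evaluation points, with the $r\le\lceil\log_p(n+1)\rceil$ bound and the binary case for $t=1$ handled identically. If anything you are more explicit than the paper, which defers the full-rank claim to ``extended work''; note only that both arguments leave the same point unexamined, namely that when $q-1=n-t$ the surviving exponent set in rounds other than the first is not an interval of consecutive integers, so its elements being distinct modulo $q-1$ is not automatic.
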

\begin{proof}
We will prove the lower bound by construction. Assume a NPS-t at one
particular time $t_\delta^\ell$ in the round  $\ell$ in a certain session
$\delta$.  The protection code of NPS-t against t path failures is given
as
\begin{eqnarray}\label{eq:protectioncode1} C_{t} =\left[
\begin{array}{ccccc}1 &1 &1 &\cdots &1\\1 &\alpha &\alpha^2 &\cdots &\alpha^{n-1}\\1
&\alpha^2 &\alpha^4 &\cdots &\alpha^{2(n-1)}\\\vdots& \vdots &\vdots
&\ddots &\vdots\\1 &\alpha^{t-1} &\alpha^{2(t-1)} &\cdots
&\alpha^{(t-1)(n-1)}\end{array}\right]\end{eqnarray}

Without loss of generality, the interpretation of
Equation~(\ref{eq:protectioncode1}) is as follows:
\begin{compactenum}[i)]
\item The columns correspond to the senders $S$ and rows correspond to t
    encoded data $y_1,y_2,\ldots,y_t$.
\item The first row corresponds to $y_1$ if we assume the first
    round
    in
    session one. Furthermore, every row represents the coefficients of
    every senders at a particular round.
\item The column $i$ represents the coefficients of the sender $s_i$
    through all protection paths $L_1,L_2,\ldots,L_t$.
\item Any element $\alpha^i \in F_q$ appears once in a column and row,
    except in the first column and first row, where all elements are one's. All columns (rows) are linearly independent.

\end{compactenum}

Due to the fact that the $t$ failures might occur at any $t$ working paths of
$L=\{l_1,L_2,\ldots,l_n\}$, then we can not predict the $t$ protection paths
as well. This means that $t$ out of the $n$ columns do not participate in the
encoding coefficients, because $t$ paths will carry encoded data. We notice
that removing any $t$ out of the $n$ columns in
Equation~(\ref{eq:protectioncode1}) will result in $n-t$ different coefficients in each  column. Furthermore any $t$ columns will give a $\mu \times \mu$ square sub-matrix that has a full-rank, this will be proved in our extended work.  Therefore the smallest
    finite field that  satisfies this condition must have $n-t+1$
    elements.

  The upper bound comes from the case of no failures, hence $q \geq
    (n+1)$. Assume q is a prime, then the result follows.
\end{proof}

if $q=2^r$, then in general the previous bound can be stated as
\begin{eqnarray}
n-t+1 \leq q \leq 2^{\lceil \log_2(n+1) \rceil}.
\end{eqnarray}

We defined the feasible solution for the encoding and decoding operations of NPS-t as the solution that has integer reachable upper bounds.

\medskip

\begin{corollary}
The protection code~(\ref{eq:protectioncode1}) always gives a feasible solution.
\end{corollary}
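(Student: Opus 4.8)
The plan is to read ``feasible solution'' in the sense fixed just before the corollary: a choice of alphabet for which the upper bound on $q$ produced by the preceding Theorem is an \emph{attainable} field size, while the lower bound $q \geq n-t+1$ is still respected, and for which the matrix $C_t$ of~(\ref{eq:protectioncode1}) genuinely performs the recovery described in the three decoding cases. So the first step is to exhibit such a $q$. Take $q^{\star} = 2^{\lceil \log_2(n+1)\rceil}$, the upper endpoint appearing in the Theorem. By construction $q^{\star}$ is a power of two, hence a prime power, so $\F_{q^{\star}}$ exists; this is exactly the statement that the upper bound is ``integer reachable.'' Next I would check that the feasibility interval is non-empty: for any $t \geq 1$ we have $n-t+1 \leq n < n+1 \leq q^{\star}$, so $q^{\star}$ simultaneously satisfies the lower bound $q \geq n-t+1$ and realizes the upper bound.

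The second step is to instantiate $C_t$ over $\F_{q^{\star}}$ and verify it does its job. Choose $\alpha$ to be a generator of the multiplicative group $\F_{q^{\star}}^{\ast}$, so $\ord(\alpha) = q^{\star}-1 \geq n$. Then the $n$ scalars $1,\alpha,\alpha^{2},\ldots,\alpha^{n-1}$ indexing the columns of~(\ref{eq:protectioncode1}) are pairwise distinct, since two equal powers would force $\ord(\alpha)$ to divide a nonzero integer of absolute value at most $n-1$. The $(j,i)$-entry of $C_t$ is $(\alpha^{\,i-1})^{\,j-1}$, so restricting to any $t$ columns $i_1 < \cdots < i_t$ yields a $t \times t$ matrix with determinant equal to the Vandermonde product $\prod_{k<\ell}(\alpha^{\,i_{\ell}-1} - \alpha^{\,i_{k}-1})$; by the distinctness just established this is nonzero, so every $t \times t$ submatrix of $C_t$ is invertible over $\F_{q^{\star}}$.

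With that in hand the corollary follows: whichever $t$ working paths fail, the $t$ surviving protection rows restricted to the $n-t$ unencoded columns form an invertible linear system, so each of the three decoding cases of Section~\ref{sec:SNF} is solvable, and this happens over a field whose size $q^{\star}$ is an honest prime power lying in $[\,n-t+1,\; 2^{\lceil \log_2(n+1)\rceil}\,]$. Hence $C_t$ always gives a feasible solution. The one genuine obstacle is the full-rank claim that the Theorem's proof deferred to the extended work; I would discharge it by the Vandermonde computation above, being careful about two points: that the all-ones first row is the $\alpha^{0}$ row and therefore fits the Vandermonde pattern, and that primitivity of $\alpha$ (equivalently $q^{\star}-1 \geq n$) is precisely what keeps the column labels $\alpha^{\,i-1}$ from colliding. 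A secondary subtlety worth a remark is that feasibility here is an \emph{existence} statement about the alphabet — it is the upper endpoint $q^{\star}$ that is guaranteed to work, whereas a field of size near the lower bound $n-t+1$ may be too small to host $n$ distinct coefficients.
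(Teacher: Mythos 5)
Your reading of ``feasible'' matches the definition the paper fixes in the sentence just before the corollary, and your argument is correct; but note that the paper itself offers essentially no proof here. Its intended justification is apparently the bare observation that the upper endpoint $2^{\lceil \log_2(n+1)\rceil}$ is an integer and a power of two, hence a realizable field size, so the ``integer reachable upper bound'' condition holds by inspection. Your proposal contains that step (the non-emptiness check $n-t+1\le n+1\le q^{\star}$) and then goes considerably further: the Vandermonde computation showing that every $t\times t$ submatrix of $C_t$ is invertible when $\alpha$ is primitive is exactly the full-rank claim that the paper's Theorem explicitly defers to ``our extended work'' and never establishes. In effect you have proved a stronger statement --- that the code over $\F_{q^{\star}}$ actually decodes under every failure pattern --- where the paper proves none. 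Two remarks. First, your observation that a field of size exactly $n-t+1$ cannot host $n$ distinct column labels is a genuine and correct criticism of the paper's stated lower bound: the Vandermonde argument requires $\ord(\alpha)\ge n$, i.e.\ $q\ge n+1$, for all $\binom{n}{t}$ failure patterns to be recoverable, so $q=n-t+1$ is not attainable by this construction for $t\ge 2$; this does not harm the corollary, since feasibility only needs one working $q$. Second, a small slip of phrasing: the system the receivers invert is the $t\times t$ submatrix of $C_t$ indexed by the $t$ \emph{failed} columns (the unknowns are the lost data units), not the $n-t$ unencoded ones; your Vandermonde lemma covers this regardless, since it applies to any choice of $t$ columns.
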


\bigskip

\section{Conclusions}\label{sec:conclusion}
Protection against node and link failures are extensional in all communication networks.
In this paper, we presented a model for network protection against a single
node failure, which is equivalent to
protection against $t$ link failures,
and can therefore be used  to protect against $t$ link failures.
We demonstrated an implementation strategy
for the proposed network protection scheme.
The network capacity is estimated, and
bounds on the network resources are established.
Our future work will include approaches for
deploying the proposed protection strategy.

\bigskip

\section*{Acknowledgment}
This research was supported in part by grants CNS-0626741 and
CNS-0721453 from the National Science Foundation,
and a gift from Cisco Systems.

\scriptsize
\bibliographystyle{plain}

\end{document}